\theoremstyle{plain}
\newtheorem{thm}{Theorem}
\newtheorem{lem}[subsection]{Lemma}
\theoremstyle{definition}
\newtheorem{rem}{Remark}
\newtheorem{example}{Example}
\newtheorem{defi}{Definition}
\newtheorem{remark}{Remark}
\newcommand{\CC}{{\mathbb C}}
\title{Spectral curve duality beyond the two-matrix model}
\author{Martin T. Luu \footnote{Martin T. Luu, Department of Mathematics, University of California, Davis, CA 95616, USA, E-mail address: mluu$\textrm{@}$math.ucdavis.edu}}
\date{}
\begin{document}

\maketitle

\begin{abstract}
We describe a simple algebraic approach to several spectral duality results for integrable systems and illustrate the method for two types of examples: The Bertola--Eynard--Harnad spectral duality of the two-matrix model as well as the various dual descriptions of minimal model conformal field theories coupled to gravity. 
\end{abstract}

\section{Introduction}

Adams, Harnad, and Hurtubise developed in \cite{AHH} a general framework to deal with integrable systems having Lax descriptions of different sizes. Originally this was done within the context of isospectral deformations, later on within the context of isomonodromic deformations. The latter circle of ideas is often called Harnad duality, see for example Boalch's work \cite{BOA} for a very general result in this direction. One aspect of these duality results for Lax operators is a duality of underlying spectral curves. Starting with \cite{BEH1}, and then continued in \cite{BEH2}, Bertola, Eynard, and Harnad have shown such spectral dualities for the two-matrix model. 

Fixing $N\ge 1$ and two polynomials $V_{1}(x)$ and $V_{2}(y)$, the corresponding two-matrix model of $N \times N$ matrices can be studied via biorthogonal polynomials $\pi_{n}(x)$ and $\sigma_{n}(x)$ which satisfy
$$\iint  \pi_{n}(x)\sigma_{m}(y)e^{-V_{1}(x)-V_{2}(y)+xy} \; \;  \textrm{d}x \;\textrm{d} y = \delta_{m,n}.$$

The spectral duality associated to the $\pi_{n}$'s (a similar result holds for the $\sigma_{n}$'s) of Bertola, Eynard, Harnad relates the $\partial_{x}$ action on the suitably normalized $\pi_{n}$'s to the analogous differentiation action on the Fourier transforms of the functions. When the size $N$ of the matrices involved is approaching $\infty$, the duality simplifies. In particular, it is shown in \cite{BEH2} that the duality can be viewed as two different ways of expressing a certain resultant of two Laurent polynomials $f,g$ in $\CC[\lambda,\lambda^{-1}]$, where $\lambda$ is some indeterminate. 

In the present work we take this algebraic viewpoint a bit further: One can view the ring of Laurent polynomials as a module over a polynomial ring $\CC[u]$ in two ways, by letting $u$ act via multiplication by $f$ or $g$. From this point of view, it is natural to generalize the duality of \cite{BEH2} to more general pairs of $\CC[u]$-module structures. We carry out this simple idea in Section \ref{spectral-duality-section} and essentially re-derive the Bertola--Eynard--Harnad duality in Section \ref{large-N-section}. 

The generality of the algebraic result of Section \ref{spectral-duality-section} suggests the possibility of developing a common framework for various spectral curve dualities. We illustrate this by showing that not only the above mentioned two-matrix duality can be deduced, but also the spectral duality aspect of the following duality of conformal field theories.

The minimal model conformal field theories coupled to gravity have two different mathematical descriptions via generalized KdV hierarchies. Write the central charge $c$ as
$$c=1-6 \cdot \frac{(p-q)^{2}}{pq}$$ 
for positive co-prime integers $p,q$. The theory is then known, see for example \cite{FKN}, to be describable by a $p$-reduced Lax operator of the KP integrable hierarchy as well as by a $q$-reduced Lax operator. There is a duality relating the two theories and as a classical limit there is a duality of spectral curves. We show in Section \ref{quantization-section}, see Theorem \ref{quantum-curve-duality} for a precise statement, that this spectral duality is just another example of the same algebraic spectral duality from which we deduce the two-matrix model duality.

\section{Spectral duality}
\label{spectral-duality-section}

We describe a very basic spectral duality for bundles on $\mathbb{A}^{1} = \textrm{Spec } \CC[u]$. The algebraic set-up for our approach is the following: 

Let $\mathcal V$ be a non-zero $\CC$-vector space with a $\CC$-linear endomorphism $A$. Let $u$ be an indeterminate and suppose $\mathcal V$ has the structure of a free finite rank $\CC[u]$-module by letting $u$ act via $A$. Let $a$ denote the rank and choose a $\CC[u]$-basis $\{v_{1},\cdots,v_{a}\}$. 

\begin{defi}
For a $\CC$-linear endomorphism $B$ of $\mathcal V$ let $M_{B,A}$ be the element of $\mathfrak g \mathfrak l_{a}[u]$ which describes the $B$ action with respect to $A$: For each $1\le i \le a$ let
$$B \cdot v_{i} = \sum_{j=1}^{a} (M_{B,A})_{j,i}(A) \;  v_{j}.$$
\end{defi}

\begin{defi}
Define the spectral curve
$$X_{B,A}= \left \{(x,y) \in \mathbb{A}^{2}(\CC) \; \Big |  \; \det(y\cdot \textrm{\textbf{1}}_{a} - M_{B,A}(x)) = 0 \right \}.$$
\end{defi}

Suppose now $P$ and $Q$ are two $\CC$-linear endomorphisms of $\mathcal V$ satisfying the conditions of $A$ above. This yields two spectral curves $X_{P,Q}$ and $X_{Q,P}$ and it turns out, see Lemma \ref{duality-lemma}, that these are related via a simple duality. A point we would like to emphasize is that this duality does not depend on any assumptions concerning the value of the commutator $[P,Q]$, in particular the duality is not restricted to the string equation context of \cite{BEH2} where $[P,Q]$ is a constant.
 
\begin{defi}
For a curve $X$ in $\mathbb{A}^{2}=\textrm{Spec }\CC[x,y]$ cut out by an equation $f(x,y)=0$ denote by $\check X$ the curve cut out by $f(y,x)=0$.
\end{defi}
The following simple result is central to our considerations:
\begin{lem}[Spectral Duality Lemma]
\label{duality-lemma}
There is an equality of spectral curves
$$X_{Q,P}= \check X_{P,Q}.$$
\end{lem}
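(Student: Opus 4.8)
The plan is to prove the (stronger) statement that the two defining polynomials agree up to a nonzero scalar, so that in particular their vanishing loci coincide. First I would reinterpret each side intrinsically. Since $\mathcal{V}$ is free over $\CC[u]$ when $u$ acts via $P$, the rule $v_{i}\mapsto Qv_{i}$ extends uniquely to a $\CC[P]$-linear endomorphism $\widetilde{Q}$ of $\mathcal{V}$ whose matrix in the basis $\{v_{i}\}$ is exactly $M_{Q,P}(u)$; hence $\det(y\cdot\mathbf{1}_{a_{P}}-M_{Q,P}(x))$ is the characteristic polynomial of $\widetilde{Q}$ over $\CC[P]$ with $u$ specialized to $x$, where $a_{P}$ is the rank of $\mathcal{V}$ over $\CC[P]$. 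Equivalently, over $R=\CC[x,y]$ with $\mathcal{V}_{R}=\mathcal{V}\otimes_{\CC}R$, the curve $X_{Q,P}$ is the support of the $R$-module $\mathcal{V}_{R}\big/\big((P-x)\mathcal{V}_{R}+(\widetilde{Q}-y)\mathcal{V}_{R}\big)$: this module is presented by the square matrix $M_{Q,P}(x)-y\cdot\mathbf{1}$ over the domain $R$, so its support is precisely the zero locus of that determinant. By the symmetric construction, $\check X_{P,Q}$ is the support of $\mathcal{V}_{R}\big/\big((Q-y)\mathcal{V}_{R}+(\widetilde{P}-x)\mathcal{V}_{R}\big)$, where $\widetilde{P}$ is the $\CC[Q]$-linear extension of $w_{j}\mapsto Pw_{j}$. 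So it suffices to match these two determinants.

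The core of the argument is then a resultant, or Schur-complement, identity. On $\mathcal{V}_{R}$ both $P-x$ and $Q-y$ are injective with locally free cokernels, of ranks $a_{P}$ and $a_{Q}$. I would choose a finite-dimensional $\CC$-subspace of $\mathcal{V}$ large enough to carry both ranks and the orders of all entries of $M_{Q,P}$ and $M_{P,Q}$, and encode the combined action of $P-x$ and $Q-y$ on it by a single square matrix over $R$. Eliminating the "$(P-x)$-directions" from this matrix should recover $\det(y\cdot\mathbf{1}_{a_{P}}-M_{Q,P}(x))$ up to a unit of $R$, while eliminating the "$(Q-y)$-directions" recovers $\det(x\cdot\mathbf{1}_{a_{Q}}-M_{P,Q}(y))$ up to a unit; both being Schur complements of the same matrix, they agree up to a unit (indeed up to a sign). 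This is the basis-free avatar of the Bertola--Eynard--Harnad computation, in which the resultant $\mathrm{Res}_{\lambda}$ of two Laurent polynomials is expressed in two ways.

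It is worth recording one tempting shortcut that does \emph{not} suffice. Writing $p_{i}(P)-p_{i}(x)=(P-x)q_{i}(P)$ one gets the operator identity $[Q,p_{i}(P)]=(Q-y)(P-x)q_{i}(P)-(P-x)q_{i}(P)(Q-y)$, which shows $(\widetilde{Q}-Q)\mathcal{V}_{R}\subseteq(P-x)\mathcal{V}_{R}+(Q-y)\mathcal{V}_{R}$, and symmetrically for $\widetilde{P}-P$. Hence both modules above surject onto $\mathcal{V}_{R}\big/\big((P-x)+(Q-y)\big)\mathcal{V}_{R}$, which only gives the inclusion of the (manifestly symmetric) locus where $P-x$ and $Q-y$ "fail to be coprime on $\mathcal{V}$" into both $X_{Q,P}$ and $\check X_{P,Q}$; since that locus is in general strictly smaller than either curve, the Schur-complement identity is genuinely needed.

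The main obstacle I anticipate is exactly the non-commutativity of $P$ and $Q$. When $[P,Q]=0$ everything collapses: $\widetilde{Q}=Q$, $Q$ descends modulo $P-x$, and both curves are just the support of $\mathcal{V}_{R}\big/\big((P-x)+(Q-y)\big)\mathcal{V}_{R}$. In general $Q$ does not descend modulo $P-x$, the commutator corrections really do alter the modules involved, and the work is to check that they drop out of the relevant determinant, i.e. that the Schur-complement bookkeeping survives $[P,Q]\neq0$. Carrying that through with the right finite-dimensional model — so that no spurious factors depending on the choice of bases are introduced — is the step I expect to demand the most care.
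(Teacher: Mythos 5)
Your route is genuinely different from the paper's proof of Lemma \ref{duality-lemma}: the paper never compares the two determinants as polynomials, it only matches their zero loci, via a pointwise eigenvector correspondence. Concretely, it studies $\xi_{x,y}\colon \mathcal V^{(1)}\times\mathcal V^{(2)}\to\mathcal V$, $(v,w)\mapsto (Q-y)v+(P-x)w$, and uses the evaluation map $\phi\colon\sum a_{i}(P)v_{i}\mapsto\sum a_{i}(x)v_{i}$ (which kills $(P-x)\mathcal V$), together with its analogue for the $Q$-structure, to pass between nonzero solutions of $(M_{Q,P}(x)-y)\overline v=0$ and of $(M_{P,Q}(y)-x)\overline w=0$. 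Your opening step is fine: identifying $\det(y\cdot\mathbf 1_{p}-M_{Q,P}(x))$ with the $0$-th Fitting ideal of $\mathcal V_{R}/\bigl((P-x)\mathcal V_{R}+(\widetilde Q-y)\mathcal V_{R}\bigr)$ is correct, and your warning that the naive symmetric quotient only gives an inclusion of supports is well taken. But the entire weight of the argument rests on the sentence beginning ``I would choose a finite-dimensional $\CC$-subspace\dots'': the single square matrix whose two Schur complements are supposed to be the two characteristic polynomials is never exhibited, and producing it is exactly the content that needs proof. Such a Sylvester/resultant-type matrix does exist in the paper's applications, because there $\mathcal V$ is $\CC[\lambda,\lambda^{-1}]$ or $\CC[z]$ and $P,Q$ have band/degree structure, so that $(Q-y)\mathcal V^{(1)}+(P-x)\mathcal V^{(2)}$ sits inside a subspace of dimension $p+q$ (this is the Bertola--Eynard--Harnad computation); the hypotheses of the Lemma --- two free $\CC[u]$-module structures and nothing else --- provide no such finite-dimensional model.

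Moreover, the statement you aim for, equality of the two determinants up to a unit of $R$ (i.e.\ a nonzero constant), is strictly stronger than the Lemma and fails at this level of generality, so no amount of Schur-complement bookkeeping can complete the plan as written. When $[P,Q]\neq 0$ the extension $\widetilde Q$, and with it $\det(y\cdot\mathbf 1_{p}-M_{Q,P}(x))$, depends on the chosen $\CC[P]$-basis. For instance, take $\mathcal V=\CC[t]$, let $Q$ be multiplication by $t$ (free of rank $1$), and define $P$ by $P(1)=t$, $P(t^{k})=t^{k+2}$ for $k\ge 1$; then $\mathcal V$ is free of rank $2$ over $\CC[P]$ with basis $\{1,t^{2}\}$. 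One finds $M_{P,Q}(u)=(u)$, so $\det(x\cdot\mathbf 1_{1}-M_{P,Q}(y))=x-y$, while $\det(y\cdot\mathbf 1_{2}-M_{Q,P}(x))$ is monic of degree $2$ in $y$: it equals $y(y-x)$ for the basis $\{1,t^{2}\}$ and $(y-x)(y+1)$ for the basis $\{1,t^{2}-t\}$. Hence no choice of bases makes the two determinants proportional (the $P$-side polynomial, even its zero locus, moves with the basis), and an ``up to a unit'' identity can only hold under additional structure of the kind present in the applications, which your outline never invokes; note that the paper itself claims only equality of vanishing loci and, when comparing with Theorem \ref{BHN-theorem}, explicitly hedges ``up to questions of multiplicity of vanishing.'' To repair the write-up you must either add such hypotheses and carry out the resultant construction explicitly --- at which point you are essentially redoing the two-matrix-model case rather than proving the Lemma --- or drop the determinant identity and argue at the level of loci/eigenvectors, as the paper's proof does.
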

\begin{proof}
Let $p$ and $q$ denote the rank of $\mathcal V$ when viewed as a $
\CC[u]$-module by letting $u$ act via $P$ and $Q$, respectively. Let 
$v_{1},\cdots,v_{p}$ and $w_{1},\cdots,w_{q}$ be corresponding $
\CC[u]$-bases and let $\mathcal V^{(1)}$ denote the $\CC$-span of the 
$v_{i}$'s and $\mathcal V^{(2)}$ the $\CC$-span of the $w_{i}$'s.

The curve $X_{Q,P}$ is the vanishing locus of $\det(y\cdot \textrm{\textbf{1}}_{p} - M_{Q,P}(x))$ and $\check X_{P,Q}$ is the vanishing locus of $\det(x\cdot \textrm{\textbf{1}}_{q} - M_{P,Q}(y))$. We prove their equality along similar lines as the arguments of \cite{BEH2} (Section 3). Define for $x,y$ in $\CC$ the $\CC$-linear map
$$\xi_{x,y} : \mathcal V^{(1)} \times \mathcal V^{(2)} \rightarrow \mathcal V$$
by
$$(v,w) \mapsto (Q- y \cdot \textrm{id}) \cdot v + (P-x \cdot \textrm{id}) \cdot w.$$
The map $\xi_{x,y}$ has a non-trivial kernel if and only if there is $v$ in $\mathcal V^{(1)}$ and $w$ in $\mathcal V^{(2)}$ such that
\begin{eqnarray}
\label{resultant-equation}
(Q- y \cdot \textrm{id}) \cdot v = (P-x \cdot \textrm{id}) \cdot w
\end{eqnarray}
and such that not both of $v$ and $w$ are $0$. Since both $\CC[u]$-module structures are free, it follows that neither $Q- y \cdot \textrm{id}$ nor $P-x \cdot \textrm{id}$ can have non-trivial kernel. Therefore, neither $v$ nor $w$ are $0$. Rewrite Equation (\ref{resultant-equation}) now in matrix form: Let $\overline{v} \in \CC^{p}$ denote the vector describing $v$ with respect to $v_{1},\cdots, v_{p}$ and let $\overline{w} \in \CC^{q}$ denote the vector describing $w$ with respect to $w_{1},\cdots,w_{q}$. Define a $\CC$-linear map $\phi$ from $\mathcal V$ to $\mathcal V^{(1)}$ by
$$\sum_{i=1}^{p} a_{i}(P) v_{i} \mapsto \sum_{i=1}^{p} a_{i}(x) v_{i}$$
where the $a_{i}$'s are polynomials. Applying $\phi$ to Equation (\ref{resultant-equation}) then gives
\begin{eqnarray}
\label{first-eigenvector-equation}
(M_{Q,P}(x)-y \cdot \textbf{1}_{p})  \; \overline{v}= \overline{0}.
\end{eqnarray}
An analogous argument yields
\begin{eqnarray}
\label{second-eigenvector-equation}
(M_{P,Q}(y)-x\cdot \textbf{1}_{q}) \; \overline{w} = \overline{0}.
\end{eqnarray}
The reverse direction of the argument also holds: Suppose Equation (\ref{first-eigenvector-equation}) has a non-zero solution $\overline{v}$. Then for the corresponding $v$ one knows that $(Q-y\cdot \textrm{id})\cdot v$ is an element $\sum_{i=1}^{p}a_{i}(P)v_{i}$ of $\mathcal V$ that is in the kernel of $\phi$. Hence $x$ is a root of each of the polynomials $a_{i}$ and hence $a_{i}(P)=(P-x)\cdot b_{i}(P)$ for some polynomial $b_{i}$ and hence 
$$(Q-y\cdot \textrm{id})\cdot v=(P-x\cdot \textrm{id}) \cdot \left( \sum_{i=1}^{p} b_{i}(P) v_{i} \right )$$
as desired. It follows that the kernel of $\xi_{x,y}$ is non-zero and that Equation (\ref{second-eigenvector-equation}) also has a non-zero solution. The same argument applies with the role of the two equations reversed. In conclusion, we have shown that the two characteristic polynomials defining the spectral curves $X_{Q,P}$ and $\check X_{P,Q}$ have the same vanishing locus in $\mathbb{A}^{2}(\CC)$.
\end{proof}

Note the following relation to resultants: Consider the special case of the set-up of Lemma \ref{duality-lemma} where $\mathcal V=\CC[\lambda]$ and $P,Q$ are the multiplication operators by elements in $\CC[\lambda]$ of degree $p$ and $q$ respectively. One can take $$\mathcal V^{(1)}= \textrm{$\CC$-span } \{1,\cdots, \lambda^{p-1}\} \;\; , \;\; \mathcal V^{(2)}=\textrm{$\CC$-span } \{1,\cdots,\lambda^{q-1}\}.$$ 
The determinant of $\xi_{x,y}$, viewed as a map from $\mathcal V^{(1)} \times \mathcal V^{(2)}$ to the span of $\{1,\cdots, \lambda^{p+q-1}\}$, is in fact the resultant of the polynomials $P-x$ and $Q-y$. The two characteristic polynomials defining $X_{Q,P}$ and $\check X_{P,Q}$ then correspond to two different ways to express this resultant. See for example \cite{GKZ} (Section 3.4) for how this fits into a very broad framework of calculation of generalized resultants. A similar remark applies when $\CC[\lambda]$ and $P,Q$ are changed to Laurent polynomials, as discussed by Bertola--Eynard--Harnad in \cite{BEH2}, and this remark is relevant for the calculations in Section \ref{large-N-section}.

\section{Applications}
We now give two applications of the duality described in the previous section. 

\subsection{Quantization of differential operators}
\label{quantization-section}

In this section we give an example of the set-up of the previous section
and relate the spectral duality to a (Fourier) duality of quantum curves. 

Recall that the operators $P$ and $Q$ of our constructions are related to $\CC[u]$-modules, where $u$ is our chosen coordinate on $\mathbb{A}^{1}$. Fix now two further indeterminates $z$ and $s$. Let $\mathcal V=\CC[z]$ and $v_{i}=z^{i-1}$ for $i=1,2,\cdots$. Consider the action of $\CC[\![s]\!][\partial_{s}]$ on $\mathcal V$ given via $\partial_{s} \mapsto z$ and $s \mapsto - \partial_{z}$. Let $P,Q$ be two elements of $\CC[\![s]\!][\partial_{s}]$. We assume $P$ and $Q$ are of positive
$\partial_{s}$-degree $p$ and $q$, respectively, with constant leading order coefficient. Hence for example
$$P= \textrm{const} \cdot \partial_{s}^{p} + \sum_{i=0}^{p-1} p_{i}(s) \partial_{s}^{i}$$ 
with $p_{i}(s) \in \CC[\![s]\!]$. We define two $\CC[u]$-module structures $\mathcal V_{1}$ and $\mathcal V_{2}$ on $\mathcal V$ by letting $u$ act via $P$ and $Q$, respectively. Then one sees that these are free modules with
$$p = \textrm{rank } \mathcal V_{1} \;\; , \; \; q = \textrm{rank } \mathcal V_{2}.$$  
For the $\CC$-spans $\mathcal V^{(1)}$ and $\mathcal V^{(2)}$ of $\CC[u]$-bases one can take 
\begin{eqnarray}
\label{quantization-choice-of-basis}
\mathcal V^{(1)} =  \textrm{$\CC$ - span} \; \{1,\cdots,z^{p-1} \} \;\; ,\;\;
\mathcal V^{(2)}  =  \textrm{$\CC$ - span} \; \{1, \cdots,z^{q-1} \}.
\end{eqnarray}

All conditions of the set-up of Section \ref{spectral-duality-section} are met and hence one obtains from Lemma \ref{duality-lemma} the following result:

\begin{thm}
\label{operator-duality}
For $P$ and $Q$ in $\CC[\![s]\!][\partial_{s}]$ as above, one has
$$X_{Q,P}= \check X_{P,Q}.$$
\end{thm}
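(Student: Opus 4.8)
The plan is to verify that the situation described in Section \ref{quantization-section} is literally a special case of the abstract set-up preceding Lemma \ref{duality-lemma}, so that Theorem \ref{operator-duality} follows by a direct invocation of the Spectral Duality Lemma. Concretely, one must check three things: that $\mathcal V = \CC[z]$ with $u$ acting via $P$ (respectively $Q$) is a \emph{free} $\CC[u]$-module of finite rank, that its rank equals $p$ (respectively $q$), and that the sets displayed in \eqref{quantization-choice-of-basis} are honest $\CC[u]$-bases. Once these are in place, $P$ and $Q$ both satisfy the hypotheses imposed on the operator $A$ in the set-up, so one forms $M_{P,Q}$, $M_{Q,P}$, the curves $X_{P,Q}$, $X_{Q,P}$, and Lemma \ref{duality-lemma} gives $X_{Q,P} = \check X_{P,Q}$ verbatim.

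The key computation is the freeness-and-rank claim for $P$. First I would record how $P$ acts on $\mathcal V = \CC[z]$: under $\partial_s \mapsto z$ and $s \mapsto -\partial_z$, the operator $P = c\,\partial_s^p + \sum_{i=0}^{p-1} p_i(s)\partial_s^i$ becomes a differential operator $\widehat P$ on $\CC[z]$ whose top-order behaviour is multiplication by $c\,z^p$ up to lower-order-in-$z$ corrections coming from the $p_i(s) = p_i(-\partial_z)$ terms; the point is that $\widehat P$ raises the $z$-degree of a polynomial by exactly $p$ (with nonzero leading coefficient $c$), because each $p_i(-\partial_z)z^i$ term has $z$-degree at most $i \le p-1 < p$. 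Hence, by a standard degree/induction argument, every monomial $z^n$ can be written uniquely as a $\CC$-linear combination of $\{\,\widehat P^{\,k} z^{j} : 0 \le j \le p-1,\ k \ge 0\,\}$: writing $n = kp + j$ with $0 \le j \le p-1$, the element $\widehat P^{\,k} z^{j}$ has $z$-degree $n$ with nonzero leading coefficient, so one peels off the top term and recurses downward. This simultaneously shows that $\{1, z, \dots, z^{p-1}\}$ generates $\mathcal V$ over $\CC[u]$ and that there are no relations, i.e. $\mathcal V_1$ is free of rank $p$ with the asserted basis; the identical argument with $q$ in place of $p$ handles $\mathcal V_2$.

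After that, the proof is essentially assembled: with freeness and the bases \eqref{quantization-choice-of-basis} established, all hypotheses of Section \ref{spectral-duality-section} hold for the pair $(P, Q)$, and Lemma \ref{duality-lemma} applies directly to yield $X_{Q,P} = \check X_{P,Q}$. I would therefore keep the write-up short: state the action of $\widehat P$ on $\CC[z]$, give the degree argument for freeness and rank in one paragraph, note that \eqref{quantization-choice-of-basis} are the resulting bases, and conclude by citing Lemma \ref{duality-lemma}. The main obstacle — really the only nontrivial point — is the degree bookkeeping showing that $\widehat P$ shifts $z$-degree by exactly $p$ with invertible leading coefficient; everything hinges on the two standing hypotheses that $P$ has positive $\partial_s$-degree and \emph{constant} leading coefficient, since a nonconstant leading coefficient $p_p(s) = p_p(-\partial_z)$ would lower the effective $z$-degree shift and destroy freeness at rank $p$. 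One should also remark that although coefficients lie in the formal power series ring $\CC[\![s]\!]$, only finitely many terms of each $p_i(s)$ act nontrivially on any given polynomial in $z$, so $\widehat P$ is a well-defined endomorphism of $\CC[z]$ and no convergence issue arises.
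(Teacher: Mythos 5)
Your proposal is correct and follows essentially the same route as the paper: verify that $u$ acting via $P$ (resp.\ $Q$) makes $\CC[z]$ a free $\CC[u]$-module of rank $p$ (resp.\ $q$) with the bases of Equation (\ref{quantization-choice-of-basis}), then invoke Lemma \ref{duality-lemma}. The only difference is that you spell out the degree-shift argument for freeness (which the paper compresses into ``one sees''), and your bookkeeping there --- including the role of the constant leading coefficient and the finiteness of the $\CC[\![s]\!]$-action on polynomials --- is accurate.
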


We make some comments about special cases of this result:

\begin{example}
Fix positive co-prime integers $p$ and $q$. Write $q=a\cdot p +r$ for $0\le r <p$. Define $M_{q,p}$ in $\mathfrak g \mathfrak l_{p}[u]$ via
$$(M_{q,p})_{i,j}= \begin{cases} u^{a} \cdot \delta_{j+r,j} \;\; \;\;\;\;\;\;\;\;\;\;\; \textrm{ if } \;\; i \le p-r \\  \\ u^{a+1} \cdot \delta_{j+r-p+1,j} \;\; \textrm{ if } \;\;  p-r < i \le p \end{cases}$$

For the choice of $\mathcal V^{(1)}$ and $\mathcal V^{(2)}$ as described earlier one has $M_{\partial_{s}^{q},\partial_{s}^{p}}=M_{q,p}$ and Theorem \ref{operator-duality} implies that up to  the exchange of $x$ and $y$ the characteristic polynomials of $M_{p,q}$ and $M_{q,p}$ have the same vanishing locus. Of course this can be verified directly by noting that the eigenvalues of $M_{q,p}(x)$ are of the form $\zeta_{p}^{i} \cdot x^{q/p}$ with $1\le i \le p$, for a primitive $p$'th root of unity $\zeta_{p}$, and the eigenvalues of $M_{p,q}(y)$ are of the form $\zeta_{q}^{i} \cdot y^{p/q}$ with $1\le i \le q$, for a primitive $q$'th root of unity $\zeta_{q}$. Hence, up to the exchange of $p$ and $q$ the spectral curves are the vanishing locus of $y^{q}-x^{p}$.
\end{example}

\begin{example}
Consider an example of $P$ and $Q$ whose commutator is not a constant. For example, let $P=\partial_{s}^{2}+s+1$, $Q=\partial_{s}^{3}-2$. Then
$$M_{Q,P}= \begin{bmatrix}
-1 & (u-1)^{2}\\
u-1 & 0
\end{bmatrix} \; \; , \; \; M_{P,Q} = \begin{bmatrix}
1 & 1+u & 0 \\
0 & 1 & u \\
1&0&1
\end{bmatrix}$$
and the spectral duality can be verified directly:
$$\det(y \cdot \textbf{1}_{2}-M_{Q,P}(x))=y^{2}  + y -(x-1)^{3}$$
$$\det(x \cdot \textbf{1}_{3}-M_{P,Q}(y))=(x-1)^{3} - y^{2}-y.$$
\end{example}

After giving these two examples of Theorem \ref{operator-duality}, we now use the spectral duality to describe a duality in Schwarz's approach, see \cite{SCH}, to quantum curves. 

The classical data is an ordered pair $(P_{0},Q_{0})$  of commuting elements of $\CC[\![s]\!][\partial_{s}]$. Following Schwarz \cite{SCH} we define:
\begin{defi}
\label{quantization-definition}
A pair $(P_{1},Q_{1})$ of elements in $\CC[\![s]\!][\partial_{s}]$ is a quantization of $(P_{0},Q_{0})$ if 
\begin{enumerate}[(i)]
\item
$\textrm{deg}_{\partial_{s}} P_{1} = \textrm{deg}_{\partial_{s}} P_{0} \; \; , \;\;  \textrm{deg}_{\partial_{s}} Q_{1} = \textrm{deg}_{\partial_{s}} Q_{0}$
\item
$[P_{1},Q_{1}]=1$
\item
$
M_{P_{1},Q_{1}} = M_{P_{0},Q_{0}}$ (with respect to the choice of bases in Equation (\ref{quantization-choice-of-basis})).
\end{enumerate}
\end{defi}
The characteristic polynomial of $
M_{P_{1},Q_{1}} = M_{P_{0},Q_{0}}$ gives rise to a curve in $\mathbb{A}^{2}$. We call this the spectral curve and denote it by $X_{(P_{1},Q_{1})}$.
\begin{defi}
If the equality in condition (iii) is replaced by mere spectral equivalence (meaning same vanishing locus of characteristic polynomial), we call $(P_{1},Q_{1})$ a spectral quantization of $(P_{0},Q_{0})$.
\end{defi}

\begin{remark}
While this might not be immediate from the definition, the notion of quantization of differential operators is useful in quantum field theory: 

The quantization of $(\partial_{s}^{p},\partial_{s}^{q})$ leads to a pair $(P_{1},Q_{1})$ such that if $\gamma$ is a monic degree $0$ pseudo-differential operator with 
$$\gamma P_{1} \gamma^{-1} = \partial_{s}^{p}$$ 
then $\gamma \cdot \CC[z]$ is a point of the big cell of the Sato Grassmannian whose associated KP tau function is (essentially) the partition function of the $(p,q)$ minimal model coupled to gravity, generalizing the Witten--Kontsevich tau function which corresponds to $(p,q)=(2,1)$. We refer to \cite{SCH} for details.
\end{remark}

The commutation relation among $P_{1}$ and $Q_{1}$ in Definition \ref{quantization-definition} is sometimes called the string equation. Note that these come in pairs:
$$[P_{1},Q_{1}]=1 \;\; ,\;\; [-Q_{1},P_{1}]=1.$$

These can be viewed as Fourier dual string equations, coming from the Fourier transform of the one variable Weyl algebra $\CC[u,\partial_{u}]$ given by $u \mapsto \partial_{u} $ and $\partial_{u} \mapsto -u$. We denote this Fourier dual pair of operators by
$$\mathcal F(P_{1},Q_{1}) = (-Q_{1},P_{1}).$$
The following construction can be thought of as an abelian analogue of this Fourier transform.
\begin{defi}
For a curve $X$ in $\mathbb{A}^{2}=\textrm{Spec }\CC[x,y]$ cut out by an equation $f(x,y)=0$ denote by $\mathcal F X$ the curve cut out by $f(-y,x)=0$.
\end{defi}
We now show that the spectral duality of Theorem \ref{operator-duality} also implies a duality for the quantization of differential operators: 
\begin{thm}
\label{quantum-curve-duality}
Suppose $(P_{1},Q_{1})$ is a quantization of $(P_{0},Q_{0})$. Then 
$$X_{\mathcal F(P_{1},Q_{1})}=\mathcal F \left ( X_{(P_{1},Q_{1})}\right )$$
and $\mathcal F(P_{1},Q_{1})$ is always a spectral quantization of $\mathcal F(P_{0},Q_{0})$ but in general it is not a quantization.
\end{thm}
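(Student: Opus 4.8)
My plan is to deduce the curve identity from Theorem~\ref{operator-duality} by a short formal computation, to read off the spectral-quantization assertion from it, and then to settle the last clause by one explicit example.

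\emph{The curve identity.} First I would observe that $B\mapsto M_{B,A}$ is $\CC$-linear in $B$, so $M_{-Q_1,P_1}=-M_{Q_1,P_1}$, and hence the defining polynomial of $X_{\mathcal F(P_1,Q_1)}=X_{(-Q_1,P_1)}$, namely $\det\!\left(y\,\textbf{1}_p-M_{-Q_1,P_1}(x)\right)$ with $p=\deg_{\partial_s}P_1$, equals $\det\!\left(y\,\textbf{1}_p+M_{Q_1,P_1}(x)\right)$; its zero locus is therefore the image of $X_{Q_1,P_1}$ under the involution $\sigma\colon(x,y)\mapsto(x,-y)$. Second, one checks straight from the definitions that $\sigma(\check X)=\mathcal F X$ for every plane curve $X$: if $X=\{f(x,y)=0\}$ then $\check X=\{f(y,x)=0\}$ and $\mathcal F X=\{f(-y,x)=0\}$, and the substitution $y\mapsto-y$ carries the first to the second. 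Third, since $X_{(P_1,Q_1)}$ is by definition the zero locus of the characteristic polynomial of $M_{P_1,Q_1}$, i.e. $X_{(P_1,Q_1)}=X_{P_1,Q_1}$, Theorem~\ref{operator-duality} applied to the pair $(P_1,Q_1)$ gives $X_{Q_1,P_1}=\check X_{P_1,Q_1}=\check X_{(P_1,Q_1)}$. Putting the three steps together,
\[
X_{\mathcal F(P_1,Q_1)}=\sigma\!\left(X_{Q_1,P_1}\right)=\sigma\!\left(\check X_{(P_1,Q_1)}\right)=\mathcal F\!\left(X_{(P_1,Q_1)}\right),
\]
which is the first assertion. (Here $\mathcal F(P_1,Q_1)=(-Q_1,P_1)$ again satisfies the hypotheses of Section~\ref{quantization-section}, so the left-hand side is defined.)

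\emph{Spectral quantization.} Among the requirements of Definition~\ref{quantization-definition} for $(-Q_1,P_1)$ relative to $(-Q_0,P_0)=\mathcal F(P_0,Q_0)$, condition~(i) holds because $\partial_s$-degrees are unchanged under $\mathcal F$, and condition~(ii) is the already noted $[-Q_1,P_1]=[P_1,Q_1]=1$; so only the spectral form of~(iii) remains. For this I would apply the curve identity to both $(P_1,Q_1)$ and $(P_0,Q_0)$ — the latter is implicitly of the form treated in Section~\ref{quantization-section}, as otherwise $M_{P_0,Q_0}$ relative to the bases~(\ref{quantization-choice-of-basis}) would not be defined — and invoke condition~(iii) of the hypothesis, $M_{P_1,Q_1}=M_{P_0,Q_0}$, hence $X_{(P_1,Q_1)}=X_{(P_0,Q_0)}$, to get
\[
X_{\mathcal F(P_1,Q_1)}=\mathcal F\!\left(X_{(P_1,Q_1)}\right)=\mathcal F\!\left(X_{(P_0,Q_0)}\right)=X_{\mathcal F(P_0,Q_0)}.
\]
The outer terms are the zero loci of the characteristic polynomials of $M_{-Q_1,P_1}$ and $M_{-Q_0,P_0}$, so those matrices are spectrally equivalent, which is precisely the spectral form of~(iii).

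\emph{Why it is not a quantization in general.} It remains to see that the exact matrix equality $M_{-Q_1,P_1}=M_{-Q_0,P_0}$ — equivalently $M_{Q_1,P_1}=M_{Q_0,P_0}$ — can fail. The structural point is that condition~(iii) rigidifies only the $q\times q$ matrix $M_{P_1,Q_1}$: since $[P_1,Q_1]=1\neq 0$, the operator $P_1$ is not $\CC[u]$-linear for the $Q_1$-module structure, so $M_{P_1,Q_1}$ records only the action of $P_1$ on the chosen basis and carries no information about the independent $p\times p$ matrix $M_{Q_1,P_1}$; thus $M_{Q_1,P_1}$ and $M_{Q_0,P_0}$ are forced only to share a characteristic polynomial (by the curve identity, or directly by Lemma~\ref{duality-lemma}), not to be equal. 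To conclude I would exhibit a single quantization for which they differ. This last step is the main obstacle: for the most rigid classical data, such as $(P_0,Q_0)=(\partial_s^p,\partial_s^q)$ (whose quantization is essentially unique), a direct computation shows that the Fourier dual is again a quantization, so the example has to be chosen with care — a commuting pair whose spectral curve has positive genus is a natural place to look — and for it one must check by hand both that the pair is a genuine quantization and that its Fourier dual is not.
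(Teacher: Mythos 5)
Your treatment of the first two assertions is correct and is essentially the paper's argument in a slightly different packaging: the paper writes a single five-line chain of determinant identities (using $\det(y\,\textbf{1}_p - M_{-Q_1,P_1}(x)) = (-1)^p\det((-y)\,\textbf{1}_p - M_{Q_1,P_1}(x))$, Lemma~\ref{duality-lemma}, condition (iii), and Lemma~\ref{duality-lemma} again), which is exactly your three steps composed with the same application to $(P_0,Q_0)$. Your observation that $\sigma(\check X)=\mathcal F X$ and that $M_{B,A}$ is linear in $B$ is the same content.

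The genuine gap is in the last clause. The statement ``in general it is not a quantization'' is an existence claim and can only be proved by exhibiting a concrete quantization $(P_1,Q_1)$ of some $(P_0,Q_0)$ with $M_{Q_1,P_1}\neq M_{Q_0,P_0}$; your structural remark that condition (iii) only rigidifies $M_{P_1,Q_1}$ and leaves $M_{Q_1,P_1}$ constrained merely up to spectrum explains why one should \emph{expect} a counterexample, but it does not produce one, and you explicitly leave this step open. Moreover, your heuristic for where to look (commuting pairs with positive-genus spectral curve) points in an unnecessarily complicated direction: the paper's counterexample is about as degenerate as possible, namely $P_0=\partial_s^2-2\partial_s+1=(\partial_s-1)^2$, $Q_0=\partial_s$, quantized by $P_1=\partial_s^2-s$, $Q_1=\partial_s+1$, for which the spectral curve $(y-1)^2=x$ is rational; a direct computation gives
$$M_{Q_0,P_0}=\begin{bmatrix}0 & u-1\\ 1 & 2\end{bmatrix}\neq\begin{bmatrix}1 & u\\ 1&1\end{bmatrix}=M_{Q_1,P_1},$$
two matrices with the same characteristic polynomial but unequal entries. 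Until you supply such an example (and verify both that your pair is a genuine quantization and that the two dual matrices differ), the final clause of the theorem is unproved.
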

\begin{proof}
We write $\sim$ if two elements in $\CC[x,y]$ have the same vanishing locus. It follows from the Spectral Duality Lemma \ref{duality-lemma} that 
\begin{eqnarray*}
\det(y \cdot \textbf{1}_{p} - M_{-Q_{1},P_{1}}(x)) & =& (-1)^{p} \cdot  \det((-y) \cdot \textbf{1}_{p} - M_{Q_{1},P_{1}}(x) )\\[5pt]
&\sim &\det(x \cdot \textbf{1}_{q} - M_{P_{1},Q_{1}}(-y) )\\[5pt]
&= & \det(x \cdot \textbf{1}_{q} - M_{P_{0},Q_{0}}(-y) )\\[5pt]
&\sim & \det((-y) \cdot \textbf{1}_{p} - M_{Q_{0},P_{0}}(x) )\\[5pt]
&=&  (-1)^{p} \det(y \cdot \textbf{1}_{p} - M_{-Q_{0},P_{0}}(x) ). 
\end{eqnarray*}
It follows from the second line that $X_{\mathcal F(P_{1},Q_{1})}=\mathcal F \left ( X_{(P_{1},Q_{1})} \right )$ and it follows from the fifth line that $\mathcal F(P_{1},Q_{1})$ is a spectral quantization of $\mathcal F(P_{0},Q_{0})$. 

We now prove the second part of the theorem that in general $\mathcal F(P_{1},Q_{1})$ is only a spectral quantization of $\mathcal F(P_{0},Q_{0})$: Consider for example
$$P_{0}=\partial_{s}^{2}-2\partial_{s}+1 \;\; , \;\; Q_{0}=\partial_{s}.$$
One sees that $(P_{1},Q_{1})$ is a quantization, where
$$P_{1}=\partial_{s}^{2}-s \;\; , \;\; Q_{1}=\partial_{s}+1.$$
However,
$$M_{Q_{0},P_{0}}= \begin{bmatrix}
0 & u-1 \\
1& 2
\end{bmatrix} \ne \begin{bmatrix}
1 & u \\ 1 & 1
\end{bmatrix}=M_{Q_{1},P_{1}}$$
even though they have the same spectrum.
\end{proof}
\begin{rem}
The above Fourier duality in the quantization of differential operators is in fact related to the $x-y$ symmetry in topological recursion \cite{EO1}, \cite{EO2}. The latter is modeled on two-matrix model calculations. We discuss a spectral duality for the two-matrix model in the next section. 
\end{rem}

Theorem \ref{operator-duality} relates the spectra of $M_{Q,P}$ and $M_{P,Q}$. Sometimes more can be said and the two matrices completely determine each other. In the remainder of this section we describe this for the example of the matrices $M_{p,q}$ (as remarked before, their quantization is related to the $(p,q)$ minimal model conformal field theory coupled to gravity). Assume now there are $\CC$-linear endomorphisms $P$ and $Q$ of a $\CC$-vector space $\mathcal V$ satisfying the conditions in Section \ref{spectral-duality-section} (note that we do not restrict here to the case $\mathcal V = \CC[z]$) and let $p$ and $q$ denote the corresponding $\CC[u]$-module ranks. Assume that there are elements $v_{1},\cdots $ in $\mathcal V$ such that $\mathcal V^{(1)}$ is the span of $v_{1},\cdots,v_{p}$ and $\mathcal V^{(2)}$ is the span of $v_{1},\cdots,v_{q}$. 

Assume now that $M_{Q,P}=M_{q,p}$ . We claim that if $p>q$ then this automatically implies that $M_{P,Q}=M_{p,q}$, while the reverse implication does not hold in general. We give the details in the illustrative example $\{p,q\}=\{2,3\}$. One has
$$M_{2,3}= \begin{bmatrix}
0 & u & 0 \\
0&0&u\\
1&0&0
\end{bmatrix} \; , \; M_{3,2} = \begin{bmatrix}
0 & u^{2}\\
u&0
\end{bmatrix}.$$
If the operators $P$ and $Q$ satisfy $M_{Q,P}=M_{2,3}$ then in particular
$$Q\cdot v_{1}=v_{3} \;\; , \;\; Q\cdot v_{2}=P\cdot v_{1} \;\; , \;\;  Q \cdot v_{3} = P \cdot v_{2}$$ 
Therefore, since $P\cdot v_{1} = Q\cdot v_{2}$ and $P\cdot v_{2} = Q \cdot v_{3} = Q\cdot (Q\cdot v_{1})$ it follows that $M_{P,Q}=M_{3,2}$, as desired. 

Suppose now that there are operators $P$ and $Q$ with $M_{P,Q}=M_{3,2}$. Hence
$$P\cdot v_{1} =Q\cdot v_{2} \;\; , \;\; P \cdot v_{2}= Q^{2} \cdot v_{1}.$$
Then for any choice of $\xi \in \CC$ the following are valid choices for $M_{Q,P}$:
$$\begin{bmatrix}
0 & u & - \xi u\\
\xi & 0 & u \\
1&0 &0
\end{bmatrix}.$$
In particular, if $\xi \ne 0$ then $M_{Q,P} \ne M_{2,3}$. Note however that as implied by the Spectral Duality Lemma \ref{duality-lemma}, the spectrum is independent of $\xi$: The characteristic polynomial is $\lambda^{3}-u^{2}$ and is obtained (up to a sign) from the spectrum of $M_{3,2}$ with characteristic polynomial $\lambda^{2}-u^{3}$ via the exchange of the variables $\lambda$ and $u$.

\subsection{Spectral duality for two-matrix models}
\label{large-N-section}

Bertola, Eynard, and Harnad prove, among other things, in \cite{BEH1}, \cite{BEH2} a spectral duality for the two-matrix model. In the current section we revisit their result from the point of view described in Section \ref{spectral-duality-section}.

Let $x$ and $y$ be two indeterminates and fix two polynomials $V_{1}(x),V_{2}(y)$ of positive degree $d_{1}+1$ and $d_{2}+1$, respectively. For any fixed integer $N \ge 1$ one can associate to these polynomials a two-matrix model (involving pairs of $N\times N$ Hermitian matrices) via a probability measure depending on $V_{1}$ and $V_{2}$. It is known that the study of the two-matrix model (for whatever choice of $N$) can be reduced to a study of a certain collection of polynomials, namely the sequences $\{\pi_{n}(x)\}$ and $\{\sigma_{n}(y)\}$ of degree $n$ biorthogonal polynomials, defined via 
$$\iint \pi_{n}(x)\sigma_{m}(y)e^{-V_{1}(x)-V_{2}(y)+xy} \;\; \textrm{d}x \; \textrm{d} y = \delta_{m,n}.$$
We refer to \cite{BEH1} for more details, in particular for a discussion and references concerning choices of integration contour.

The spectral duality that we would like to discuss has a version for the $\pi_{n}$'s as well as the $\sigma_{n}$'s. We focus on the first case here for brevity, the other case can be treated in a completely analogous manner. It is useful to normalize the $\pi_{n}$'s in the following manner: For each $n \ge 0$ let
$$
\psi_{n}(x)  =   \pi_{n}(x)e^{-V_{1}(x)} 
$$
and consider the Fourier-Laplace transforms
$$\widehat{\psi}_{n}(y)  :=  \int \textrm{d}x  \; e^{xy}\psi_{n}(x).$$
We refer again to \cite{BEH1} for a discussion of integration contours.

The spectral duality concerns a relation between the $\partial_{x}$ and $\partial_{y}$ action on these Fourier dual sequences of functions. To state it precisely, fix now a positive integer $N\ge d_{2}$ and let 
$$\Psi := [\psi_{N-d_{2}}, \cdots, \psi_{N}]^{\textrm{T}}$$ 
and 
$$\widehat{\Psi} := [\widehat{\psi}_{N-1},\cdots,\widehat{\psi}_{N-1+d_{1}}].$$ 
As discussed in \cite{BEH2}, there is $M_{2}(x)$ in $\mathfrak g\mathfrak l_{d_{2}+1}[x]$ with  
$$\partial_{x} \; \Psi = - M_{2}(x) \cdot  \Psi$$
and there is $M_{1}(y)$ in $\mathfrak g \mathfrak l_{d_{1}+1}[y]$ with
$$\partial_{y} \; \widehat{\Psi}= \widehat{\Psi} \cdot M_{1}(y).$$
Bertola, Eynard, and Harnad prove in \cite{BEH2} (Proposition 4.1) the following spectral duality:
$$\det (x\cdot \textbf{1}_{d_{1}+1} - M_{1}(y)) = * \cdot \det (y \cdot \textbf{1}_{d_{2}+1}-M_{2}(x))$$
for some (explicitly calculable) non-zero scalar $*$. 

Recall that the integer $N$ was fixed in the above constructions. As Bertola--Harnad--Eynard have shown in \cite{BEH2} (Section 3), the spectral duality has a large $N$ limit analogue. It is this latter version of the spectral duality (see Theorem \ref{BHN-theorem} for a precise statement) that we now wish to relate to the spectral duality described in Lemma \ref{duality-lemma}.

Let $\lambda$ be an indeterminate and consider two elements of $\CC[\lambda,\lambda^{-1}]$ given by 
\begin{eqnarray}
\label{P-equation}
P = \gamma \cdot \lambda^{-1}  + \sum_{i=0}^{d_{1}} b_{i} \lambda^{i}
\end{eqnarray}
and 
\begin{eqnarray}
\label{Q-equation}
Q= \gamma \cdot \lambda + \sum_{i=0}^{d_{2}} a_{i}\lambda^{-i}
\end{eqnarray}
with $\gamma \cdot a_{d_{2}} \cdot b_{d_{1}} \ne 0$ and $d_{1}>0$ and $d_{2}>0$. Bertola--Harnad--Eynard define in \cite{BEH2}
\begin{eqnarray}
\label{A-equation}
A&=&\begin{bmatrix}
0 & 1 & 0 & \hdots & 0\\
0&0&1 & \hdots & 0\\
\vdots&&&\ddots& \vdots\\
0&& & \hdots & 1\\
-a_{d_{2}}/\gamma &  \hdots& &-a_{1}/\gamma & (x - a_{0})/\gamma
\end{bmatrix} \\
B&=&\begin{bmatrix}
0 & 1 & 0 & \hdots & 0\\
0&0&1 & \hdots & 0\\
\vdots&&&\ddots& \vdots\\
0&& & \hdots & 1\\
-b_{d_{1}}/\gamma &  \hdots& &-b_{1}/\gamma & (y - b_{0})/\gamma
\end{bmatrix}
\end{eqnarray}
as well as
$$D_{1}(x)=\gamma A^{-1} + \sum_{i=0}^{d_{1}} b_{i} A^{i} \;\;\; , \;\;\; D_{2}(y)=\gamma B+ \sum_{i=0}^{d_{2}} a_{i} B^{-i}.$$
The following is shown in \cite{BEH2} (Section 3): 
\begin{thm}[Bertola--Eynard--Harnad]
\label{BHN-theorem}
There is a non-zero constant $*$ such that
$$\det(y \cdot \emph{\textbf{1}}_{d_{2}+1}-D_{1}(x)) =* \cdot  \det (x \cdot \emph{\textbf{1}}_{d_{1}+1}  - D_{2}(y)).$$
\end{thm}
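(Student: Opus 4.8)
The plan is to deduce Theorem \ref{BHN-theorem} as a special case of the Spectral Duality Lemma \ref{duality-lemma} by exhibiting the Bertola--Eynard--Harnad data as an instance of the abstract set-up of Section \ref{spectral-duality-section}. First I would take $\mathcal V = \CC[\lambda,\lambda^{-1}]$, viewed as a $\CC$-vector space, and let $P$ and $Q$ be the multiplication operators by the Laurent polynomials in Equations (\ref{P-equation}) and (\ref{Q-equation}). One checks that under the hypothesis $\gamma\cdot a_{d_2}\cdot b_{d_1}\ne 0$ with $d_1,d_2>0$, multiplication by $P$ makes $\mathcal V$ a free $\CC[u]$-module of rank $p := d_1 + 1$ (since $P$ has ``spread'' from $\lambda^{-1}$ to $\lambda^{d_1}$, a basis is $\{1,\lambda,\dots,\lambda^{d_1}\} = \{\lambda^{0},\dots,\lambda^{p-1}\}$, up to a shift), and likewise multiplication by $Q$ makes $\mathcal V$ free of rank $q := d_2 + 1$ with basis $\{1,\lambda^{-1},\dots,\lambda^{-d_2}\}$ (again up to an overall shift by a power of $\lambda$). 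So both module structures satisfy the conditions imposed on $A$ in Section \ref{spectral-duality-section}.

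Next I would identify the matrices $A$ and $B$ of Equations (\ref{A-equation}) with the companion-type matrices of multiplication. Concretely: with respect to the basis $\{\lambda^{0},\dots,\lambda^{d_2}\}$ of $\mathcal V^{(2)}$, the matrix $A$ is exactly the matrix of the operator ``multiplication by $\lambda$'' expressed \emph{in terms of} the operator $Q$; that is, the relation $\lambda\cdot \lambda^{j} = \lambda^{j+1}$ for $j<d_2$ together with the relation obtained by solving $Q\cdot 1 = \gamma\lambda + a_0 + a_1\lambda^{-1} + \cdots$ for $\lambda^{d_2+1}$ (after clearing denominators) produces precisely the last row of $A$ with the $x$ appearing because $Q$ acts as multiplication by $u$ and we substitute $u = x$. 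In other words $A = M_{\lambda\cdot,\, Q}(x)$ in the notation of the first Definition. Similarly $B = M_{\lambda\cdot,\, P}(y)$. Then $D_1(x) = \gamma A^{-1} + \sum b_i A^i$ is nothing but $M_{P,Q}(y)$ — wait, one must be careful with which variable plays which role — it is the matrix describing the action of $P$ (which is $\gamma\lambda^{-1} + \sum b_i\lambda^i$, a Laurent polynomial in $\lambda$, hence a polynomial in $A$ and $A^{-1}$) with respect to the $Q$-module structure, evaluated at $u = x$; so $D_1(x) = M_{P,Q}(x)$. Likewise $D_2(y) = \gamma B + \sum a_i B^{-i} = M_{Q,P}(y)$. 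The only subtlety is that $A$ and $B$ involve inverses, so I would first verify that $A$ and $B$ are invertible over the relevant ring (this uses $a_{d_2}\ne 0$ and $b_{d_1}\ne 0$ respectively, which make the lower-left entries of $A$, $B$ nonzero, hence the determinants nonzero), so that the Laurent-polynomial expressions $D_1, D_2$ make sense as elements of $\mathfrak{gl}$ over the appropriate ring.

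Once these identifications are in place, Theorem \ref{BHN-theorem} follows immediately: Lemma \ref{duality-lemma} gives $X_{Q,P} = \check X_{P,Q}$, i.e. the vanishing locus of $\det(y\cdot\textbf{1}_{p} - M_{Q,P}(x))$ equals that of $\det(x\cdot\textbf{1}_{q} - M_{P,Q}(y))$; translating via $M_{Q,P}(y) = D_2(y)$ and $M_{P,Q}(x) = D_1(x)$ and relabeling variables appropriately yields that $\det(y\cdot\textbf{1}_{d_2+1} - D_1(x))$ and $\det(x\cdot\textbf{1}_{d_1+1} - D_2(y))$ have the same vanishing locus in $\mathbb{A}^2(\CC)$. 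Since both are polynomials in $x,y$ of the same bidegree and each is, up to the resultant-interpretation discussed after Lemma \ref{duality-lemma}, equal to $\det \xi_{x,y}$ up to a nonzero scalar, they must in fact agree up to a nonzero constant $*$ — I would make this last step precise by invoking the resultant viewpoint: both determinants compute (up to an explicit nonzero factor involving $\gamma$, $a_{d_2}$, $b_{d_1}$ and binomial-type constants) the resultant of the Laurent polynomials $P - x$ and $Q - y$, so they are proportional as polynomials, not merely cut out the same set.

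The main obstacle I anticipate is purely bookkeeping rather than conceptual: pinning down the precise dictionary between the companion matrices $A, B$ and the operators $M_{\lambda\cdot, Q}, M_{\lambda\cdot, P}$, including the overall power-of-$\lambda$ shift needed to turn the ``Laurent'' bases into honest $\CC[u]$-bases, and then tracking how this shift (a conjugation by a diagonal matrix, which does not change determinants) interacts with the expressions $\gamma A^{-1} + \sum b_i A^i$ and $\gamma B + \sum a_i B^{-i}$. A secondary point requiring care is the passage from ``same vanishing locus'' (which is all Lemma \ref{duality-lemma} literally delivers) to ``equal up to a nonzero scalar'' (which is what Theorem \ref{BHN-theorem} asserts); this is where the resultant interpretation is essential, since a priori two polynomials with the same zero set could differ by taking powers, and one needs the determinant-of-$\xi_{x,y}$ description to rule that out and simultaneously to identify the constant $*$.
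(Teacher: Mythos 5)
Your proposal follows essentially the same route as the paper: realize $P,Q$ as multiplication operators on $\CC[\lambda,\lambda^{-1}]$, show the two $\CC[u]$-module structures are free of ranks $d_1+1$ and $d_2+1$ with consecutive powers of $\lambda$ as bases, identify $D_1(x)=M_{P,Q}(x)$ and $D_2(y)=M_{Q,P}(y)$ via the companion matrices $A,B$, and invoke Lemma \ref{duality-lemma}. You are in fact slightly more careful than the paper on the last step: the paper concludes only ``up to questions of multiplicity of vanishing,'' whereas your appeal to the resultant interpretation of $\det \xi_{x,y}$ is precisely what is needed to upgrade ``same vanishing locus'' to ``equal up to a nonzero constant.''
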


We give a quick proof of this using our spectral duality result from Section \ref{spectral-duality-section}.

Consider the space $\mathcal V= \CC[\lambda,\lambda^{-1}]$ of Laurent polynomials and consider a $\CC[u]$-module structure on $\mathcal V$ where $u$ acts via multiplication by $R = \sum_{i=e}^{f} c_{i}\lambda^{i}$ with $e<0$ and $f>0$ and $c_{e}\cdot c_{f} \ne 0$. By a degree consideration this is a torsion-free module. It is also finitely generated and in fact we claim that $\mathcal B=\{\lambda^{e}, \cdots, \lambda^{f-1}\}$ is a basis. Note that $\mathcal B$ is linearly independent over $\CC[u]$: Otherwise, there is a non-zero $\CC$-linear combination of the elements of $\mathcal B$ that equals $u \cdot \xi $ for some $\xi$ in $\mathcal V$ and a degree consideration gives a contradiction. Furthermore, $\mathcal B$ is a generating set: By using a change of coordinates $\lambda \mapsto \lambda^{-1}$ argument one sees that it is sufficient to show that each $\lambda^{i}$ for $i \ge 0$ can be obtained. By induction, suppose $\lambda^{j}$ for $0\le j <i$ is already known to be obtainable. Write $i = q f + r$ with $0 \le r < f$. Then
$R^{q}\lambda^{r} =C_{1}\lambda^{i}+ \cdots +C_{2}\lambda^{qe+r}$ for some non-zero constants $C_{1},C_{2}$. Considering $R^{a}\lambda^{b}$ with $0\le a \le q-1$ and $r+e \le b \le r$ (note $\lambda^{b} \in \mathcal B$) one obtains expressions with lowest terms ranging from $qe+r$ till $r+e$ and with highest term strictly less than $i$. Hence, some linear combinations of these terms with $R^{q}\lambda^{r}$ has no terms below $\lambda^{e}$. It follows that $\mathcal B$ is a generating set. In conclusion, one obtains that $\mathcal V$ is a free $\CC[u]$-module of rank $|e|+f$. In fact, any $|e|+f$ consecutive powers of $\lambda$ are linearly independent and form a basis. 

We now apply the above argument to the two $\CC[u]$-module structures $\mathcal V_{1}$ and $\mathcal V_{2}$ on $\mathcal V=\CC[\lambda,\lambda^{-1}]$ where the action of $u$ is via
$P$ and $Q$, respectively, as defined in Equation (\ref{P-equation}) and Equation (\ref{Q-equation}). It follows that
\begin{eqnarray*}
p &:= &\textrm{rank } \mathcal V_{1}=d_{1}+1 \\
 q &:= & \textrm{rank } \mathcal V_{2}= d_{2}+1
\end{eqnarray*}
and one can make the following choices
\begin{eqnarray*}
\mathcal V^{(1)} & =&  \textrm{$\CC$ - span} \; \{\lambda^{N-1}, \cdots, \lambda^{N-1+d_{1}} \} \\[5pt]
\mathcal V^{(2)} & =& \textrm{$\CC$ - span} \; \{\lambda^{N-d_{2}}, \cdots, \lambda^{N} \}.
\end{eqnarray*}
The Spectral Duality Lemma \ref{duality-lemma} can be applied and it follows that the vanishing locus in $\mathbb{A}^{2}(\CC)$ of $\det(y \cdot \textbf{1}_{d_{2}+1}-M_{P,Q}(x))$ and of $\det (x \cdot\textbf{1}_{d_{1}+1}  - M_{Q,P}(y))$ agree. To relate this to Theorem \ref{BHN-theorem} observe the following: The matrix $A$ of Equation (\ref{A-equation})
is the description of multiplication by $\lambda$ on $\lambda^{N-d_{2}},\cdots, \lambda^{N}$ if the $x$ action is given via $Q$. Hence, $D_{1}(x)$ is the matrix description of the $P$-action with respect to $Q$ in this same basis:
$$D_{1}(x) = M_{P,Q}(x).$$ 
Similarly one obtains
$$D_{2}(y)=M_{Q,P}(y).$$
Therefore, as claimed, the duality result of Bertola--Eynard--Harnad follows (up to questions of multiplicity of vanishing) from the spectral curve duality described in Section \ref{spectral-duality-section}.

\subsubsection*{Acknowledgments:}
It is a great pleasure to thank John Harnad for very informative exchanges.

\end{document}